\newtheorem{lem}{Lemma}
\newtheorem{propos}{Proposition}
\newtheorem{rem}{Remark}
\def\ps@pprintTitle{%
 \let\@oddhead\@empty
 \let\@evenhead\@empty
 \def\@oddfoot{}%
 \let\@evenfoot\@oddfoot}
\let\rm\mathrm
\begin{document}
\begin{frontmatter}

\title{\LARGE \bf
Including steady-state information in nonlinear models: an application to the development of soft-sensors}

\author[]{Leandro Freitas$^{a}$}
\ead{leandro.freitas@ifmg.edu.br}
\author[]{Bruno H. G. Barbosa$^{b}$}
\ead{brunohb@ufla.br}
\author[]{Luis A. Aguirre$^{c}$}
\ead{aguirre@ufmg.br}

\address{$^{a}$Departamento de Automa\c{c}\~{a}o e Inform\'{a}tica, Instituto Federal de Educa\c{c}\~{a}o, Ci\^{e}ncia e Tecnologia de Minas Gerais Campus Betim, 32677-564, Betim, MG, Brazil} 
\address{$^{b}$ Departamento de Autom\'{a}tica, Universidade Federal de Lavras, CP 3037, 37200-000, Lavras, MG, Brazil} 
\address{$^{c}$ Departamento de Engenharia Eletr\^{o}nica, Universidade Federal de Minas Gerais - Av. Ant\^{o}nio Carlos 6627, 31270-901, Belo Horizonte, MG, Brazil} 

\begin{abstract}
When the dynamical data of a system only convey dynamic information over a limited operating range, the identification of models with good performance over a wider operating range is very unlikely. 
To overcome such a shortcoming, this paper describes a methodology to train models from dynamical data {\it and}\, steady-state information, which is assumed available. The novelty is that the procedure can be applied to models with rather complex structures such as multilayer perceptron neural networks in a bi-objective fashion without the need to compute fixed points neither analytically nor numerically. As a consequence, the required computing time is greatly reduced. The capabilities of the proposed method are explored in numerical examples and the development of soft-sensors for downhole pressure estimation for a real deep-water offshore oil well. The results indicate that the procedure yields suitable soft-sensors with good dynamical {\it and}\, static performance and, in the case of models that are nonlinear in the parameters, the gain in computation time is about three orders of magnitude considering existing approaches.

\end{abstract}

\begin{keyword}
soft-sensors, artificial neural network, grey-box identification, steady-state information, Permanent downhole gauge (PDG), offshore oil platform, machine learning, artificial intelligence.  
\end{keyword}

\end{frontmatter}

\section{Introduction}


In deep water gas-lift  oil well processes \cite{Elldakli2017}, the downhole pressure is an important variable to ensure safety and to provide useful information for management and oil recovery of the oil field \cite{camponogara2010}. It can be used by anti-slugging control systems or to optimize the costs of the oil production. This pressure is measured by the Permanent Downhole Gauge (PDG). However, due to extremely hostile operating conditions PDGs, which could be located at depths greater than four kilometers, often stop working \cite{morais2019meas}. The maintenance of the sensor is not economically viable and involves high environmental risks.

In this wise, the development of soft-sensors to estimate the downhole pressure has become a valuable alternative to provide this important information \cite{aguirre2017cep}. Basically, a soft-sensor is a predictive mathematical model that estimates some quantity based on measurements of other process variables \cite{ALQUTAMI201872, kadlec2009}. In the case of the PDG, the estimated pressure is normally based on measures from the platform, where lower measurement uncertainties are expected or, even at the \emph{wet christmas tree} \cite{teixeira2014}.







Since several physical aspects are involved in the process, a complete phenomenological modeling of the downhole pressure is very difficult, this makes the use of \emph{system identification} tools prominent. The main purpose of \emph{system identification} is to build dynamic models based on experimental data. To this end, related fields of knowledge like \emph{statistics}, \emph{optimization}, \emph{machine learning}, have become important tools to extract information about system dynamics from data. The use of artificial neural networks (ANN) for systems identification is another successful tool \cite{chen1990}, especially dealing with nonlinear systems. In most of cases, model parameters are estimated (network training) using just one source of information: the dynamical data set. This will be referred to as \emph{black box} identification.

However, traditional methods for developing soft-sensors do not efficiently use all the information hidden in process variables \cite{He2019}.
Determining a nonlinear model from a finite set of observations without any \emph{prior knowledge} about the system is an ill-posed problem  \cite{johansen1996}, because a unique model may not exist, or it may not depend continuously on the observations \cite{tikhonov1977}. This issue is worsened when dealing with noisy signals, non-informative data (e.g. non-persistently exciting inputs) and high-dimensional systems. From an optimization point of view, the search space and the number of local minima grow indefinitely, generating an extra challenge. Hence, physical insights about the system is almost a requirement to achieve suitable models. When no prior knowledge is available (e.g. black-box approach), it is common to assume that the system has some smoothness property, using a regularization term \cite{johansen1996, girosi1995}.

ANN are commonly used in dynamic systems identification mainly due to their capability of fitting data generally well. In this respect \emph{black-box} training is the rule, and in principle the ANN incorporates dynamical information about system behavior from the dynamical data set \cite{rib_agu/18}. If such a data set is sufficiently informative, the model is usually able to represent the system in various respects. An important practical shortcoming, especially in the case when the dynamical data are obtained from historical data, has to be faced when the available data are not sufficiently informative. In such instances some important aspects of the system may not be correctly incorporated by the model. Not only that, because of such a lack, even the information that is present in the dynamical data may not be correctly learned by the model \cite{He2019}.

One way of circumventing the aforementioned difficulty is to provide relevant missing information -- called \emph{auxiliary information} -- apart from the dynamical data set. Auxiliary information can be as general as the overall shape of static nonlinearity underlying the process \cite{aguirre2000buck}
or symmetry properties of the system \cite{aguirre2004b,chen2008}, or could be as specific as the precise static nonlinearity \cite{aguirre2004,aguirre2007}. It has been argued that the proper use of auxiliary information in the training of neural networks is beneficial in many practical ways \cite{joerding1991}. The use of auxiliary information is the central feature of \emph{grey-box} approaches \cite{Aguirre_2019} which have been used in several applications, including chemical processes \cite{thompson1994}, hydraulic systems \cite{barbosa2011,kilic2014}, energy systems \cite{sanchez2014}, fault detection \cite{cen2013}, chaotic systems \cite{nepomuceno2003} and oil industry \cite{aguirre2017cep} to mention but a few. 

In this paper the \emph{auxiliary information} is assumed to be a set of steady-state data, although other alternatives exist \cite{tulleken1993,eskinat1993,aguirre2004,nepomuceno2007,chen2011,Aguirre_2019}. Regarding the use of steady-state data in system identification \cite{nepomuceno2007,barroso2007,barbosa2011}, the approaches presented in \cite{nepomuceno2007,barroso2007} are only applied to linear-in-the-parameters models and the one presented in \cite{barbosa2011} is very computational demanding.

Thus, this work describes a novel grey-box identification strategy to include auxiliary information about static information in dynamic models. The method has a much lower computational cost than the one in \cite{barbosa2011} and it is applicable to a wide class of model structures, from polynomial to neural network models. The inclusion of auxiliary information is accomplished redefining the objective function (including penalty terms or adding new objectives) during parameter estimation. The main contribution of this work is to show that dynamical models can be identified using auxiliary static information without computing the models fixed points, which makes the proposed approach very suitable to identify rather complex models not imposing a heavy computational burden. Besides, the use of this auxiliary information helps to find models with better dynamical performance on operating regimes not originally represented in the dynamical dataset, as shown in the numerical results. 

This paper is organized as follows. Some of the main aspects of grey-box identification and some related works are briefly mentioned in Section~\ref{sec:Related}. The problem is defined in Section~\ref{sec:probStat} and a background is provided in Section~\ref{sec:backGr}. In Section~\ref{sec:methods} the proposed procedure is presented and the results and discussions are provided in Section~\ref{sec:results}. Section~\ref{sec:conclusions} shows the main conclusions and suggestions for future work.

\section{Grey-box Identification and Related Works}
\label{sec:Related}

A central challenge in grey-box identification is how to efficiently employ auxiliary information during training.
This can be done in a number of ways by means of constraints, linguistic rules and others \cite{sanchez2014, WU202074}. Hence it is convenient to distinguish grey-box strategies in terms of \emph{model class} (model structure), \emph{type of auxiliary information} (how it is expressed), and \emph{incorporating strategy} (the manner of including auxiliary information in the model).

In terms of \emph{model class}, grey-box identification was first implemented using linear structures \cite{tulleken1993,eskinat1993,johansen1996}, but it seems more powerful for nonlinear structures, including polynomial models \cite{correa2002,nepomuceno2003,aguirre2004,barbosa2011}, radial basis functions (RBF) \cite{aguirre2007,chen2009,chen2011,rego2014}, fuzzy systems \cite{abonyi2001,abdelazim2005,sanchez2014}, multilayer perceptron (MLP) or recurrent (RNN) neural networks \cite{psichogios1992,thompson1994,braake1998,oussar2001,aguirre2004b,WU202074}. 

Procedures for including auxiliary information in MLP and RNN networks seem to be less explored since they are nonlinear in the parameters, although some methods have been put forward for
semi-physical modeling \cite{thompson1994,braake1998,oussar2001,WU202074}, for static
information \cite{amaral2001} (exact matching) and symmetry \cite{aguirre2004b} for MLPs and RBF models. Models that are linear with respect to the parameters usually lead to convex problems, that are easier to deal with and the static curve can be sometimes determined analytically depending on the model structure \cite{aguirre2004}.

Barbosa and colleagues \cite{barbosa2011} described a method to include auxiliary information using bi-objective parameter estimation, where one objective is to improve the fitness to empirical data and the other is to improve the fitness to the auxiliary information. This approach was implemented using polynomial models and compared to other techniques (e.g. constrained polynomial models, ANNs). The main drawback of this method is the high computational cost, due to the calculation of the model static curve needed to evaluate the objective function (free run simulation over different operating points), and the use of evolutionary algorithms to solve the resulting nonconvex problem. 

In terms of the \emph{type of auxiliary information}, some approaches used the stability and sign of the stationary gain for linear models \cite{tulleken1993}, the phase crossover frequency \cite{eskinat1993}, the steady-state balance equations \cite{eskinat1993}, the steady-state values \cite{nepomuceno2007}, the static curve \cite{barbosa2011}, the symmetry \cite{aguirre2004}. 

There are several ways of \emph{incorporating} auxiliary information such as Bayesian approaches \cite{herbet1993}, nonlinear optimization techniques \cite{correa2002}, the constrained least squares algorithm \cite{aguirre2004}, or multiobjective optimization procedures \cite{johansen1996,nepomuceno2007,barbosa2011}. The type of auxiliary information available and the
model class are determining factors in the choice of the method to be used.

\section{Context and Problem Statement}
\label{sec:probStat}

Consider the following NARX (Nonlinear AutoRegressive with eXogenous inputs) model class used in this work
\begin{equation}
  \label{eq:model}
      y(k) = F\left( \boldsymbol{\psi}(k-1),\,\boldsymbol{\theta} \right),
\end{equation}
\sloppy where $k$ is the sample index, $F(\, .\,)$ is a nonlinear function, $\boldsymbol{\theta}\in\mathbb{R}^q$ is the vector of parameters to be estimated from measured data, $\boldsymbol{\psi}(k-1)=\left[1 \,\,\,\,\,\,  y(k-1)\dots y(k-n_y)\,\,\,\,\,\, u(k-1) \dots  u(k-n_u) \right]^T$ is the vector of $q$ independent variables, $n_u$ and $n_y$ are the maximum lags of the input and output signals, respectively.

For the sake of clarity, only one exogenous input will be considered and no input delay is considered although the procedure can easily be extended to multi-input systems with delays. See \cite{haf_eal/19} for a comprehensive investigation on the use of multi-objetive techniques in nonlinear system identification.

\subsection{Measured data}

It is assumed that a set of dynamical data is available to estimate $\boldsymbol{\theta}$, organized as 
\begin{equation}
\boldsymbol{Z}_{\rm d}=[\boldsymbol{\psi}(k-1)\,\,\,y(k)], \nonumber
\end{equation}

\noindent 
where $k=1,\dots,N_{\rm d}:\,\boldsymbol{Z}_{\rm d}\in\mathbb{R}^{N_{\rm d}\times(n_y+n_u+2)}$. Another dynamical dataset, named $\boldsymbol{Z}_{\rm t}$ (test data), with the same characteristics of $\boldsymbol{Z}_{\rm d}$, is also considered. Ordinary \emph{black box} approaches use only dynamical datasets ($\boldsymbol{Z}_{\rm d}$ and $\boldsymbol{Z}_{\rm t}$) as measured source of information.

The auxiliary information about the \emph{steady-state} behavior of the system is expressed as a set of pairs $(\bar{u}_j,\,\bar{y}_j),~j=1,\dots,N_{\rm s}$. Hence $(\bar{u}_1,\,\bar{y}_1)$ says that if the input $u(k)=\bar{u}_1$ is held constant for a sufficiently long time, the output $\lim_{k\to\infty} y(k)=\bar{y}_1$. The steady-state information can be represented by
\begin{equation}
\boldsymbol{Z}_{\rm s}=[{\bar u}_j\,\,\,\bar y_j], \nonumber
\end{equation}

\noindent where $j=1,\dots,N_{\rm s}:\,\boldsymbol{Z}_{\rm s}\in\mathbb{R}^{N_{\rm s}\times2}$.

\subsection{Problem Statement}

For given model structure of the class shown in (\ref{eq:model}) and data sets $\boldsymbol{Z}_{\rm d}$ and $\boldsymbol{Z}_{\rm s}$, the aim is to  estimate $\boldsymbol{\theta}$ in such a way as to \textit{simultaneously} minimize error functions on $\boldsymbol{Z}_{\rm d}$ {\it and}\,~$\boldsymbol{Z}_{\rm s}$.

\section{Background}
\label{sec:backGr}

\subsection{MLP networks}

Neural network models are often implemented due to their properties as universal approximators, which can exhibit good performance in the context of dynamical systems \cite{chen1990,barbosa2011,Asteris2017}. The main challenge in the training of such structures is the fact that they are nonlinear-in-the-parameters. This leads to nonconvex estimation problems, for which backpropagation (BP) is one of the most used algorithms.

This paper considers MLP networks of the form
\begin{eqnarray} 
\label{eq:mlp}
  y(k) &=& F\left( \boldsymbol{u},\,\boldsymbol{y},\,\boldsymbol{\theta} \right) \nonumber\\
       &=& \theta_0 + \sum_{i=1}^{n_h} \theta_i \tanh \bigg( \theta_{i,0} + \sum_{j=1}^{n_y} \theta_{i,j} y(k-j) + \sum_{j=1}^{n_u} \theta_{i,(j+n_y)} u(k-j) \bigg),
\end{eqnarray}
where $n_h$ denotes the number of neurons in the hidden layer, a structural parameter assumed to be known. Vectors $\boldsymbol{u}\in \mathbb{R}^{n_u}$ and $\boldsymbol{y}\in \mathbb{R}^{n_y}$ indicate all
the lagged values of the input $u(k)$ and output $y(k)$ -- and eventually a constant -- used in the model.

Using the standard BP algorithm it is possible to fit model (\ref{eq:mlp}) to the dynamical data $\boldsymbol{Z}_{\rm d}$, but in that case the static data set $\boldsymbol{Z}_{\rm s}$, which is the auxiliary information, would not be used during  training. 

Steady-state analysis of model (\ref{eq:mlp}) is carried out by taking a constant input $u(k-j)=\bar{u},\,\forall j=1,\,\dots , n_u $ and applying it to the model. We here assume that the trained model with parameter vector $\hat{\boldsymbol{\theta}}$ is asymptotically stable, hence the output will converge to the output $\hat{\bar{y}}$ which is the solution to the following algebraic equation
\begin{eqnarray} 
\label{eq:mlp_ss}
\hat{\bar{y}} &=& \bar{F}\left( \bar{u},\,\hat{\bar{y}},\,\hat{ \boldsymbol{\theta} }\right) \nonumber\\
       0 &=& \bar{F}\left( \bar{u},\,\hat{\bar{y}},\,\hat{ \boldsymbol{\theta} }\right) - \hat{\bar{y}},
\end{eqnarray}
where $\bar{F}$ is given by
\begin{eqnarray} 
\label{eq:mlp_2}
  \hat{\bar{y}} &=& \bar{F}\left( \bar{u},\,\hat{\bar{y}},\,\hat{\boldsymbol{\theta}} \right) \nonumber\\
       &=& \hat{\theta}_0 + \sum_{i=1}^{n_h} \hat{\theta}_i \tanh \bigg( \hat{\theta}_{i,0} + \hat{\bar{y}}\sum_{j=1}^{n_y} \hat{\theta}_{i,j} + \bar{u}\sum_{j=1}^{n_u} \hat{\theta}_{i,(j+n_y)} \bigg).
\end{eqnarray}

The values of
$\hat{\bar{y}}$ that satisfy (\ref{eq:mlp_ss}) for the chosen $\bar{u}$ are called the \emph{fixed points} of $F$ for the given $\bar{u}$. The number of fixed points can vary from none to several and such fixed points can be asymptotically stable, stable or unstable. Many real processes have one asymptotically stable fixed point. This means that for the input $\bar{u}$ the process output will asymptotically converge to $\hat{\bar{y}}$ if the initial conditions are within the basin of attraction of $\hat{\bar{y}}$. After reaching steady-state, the system will remain at $y(k)=\hat{\bar{y}},~\forall k$ until the input is changed or the system is perturbed in any other way.

The fixed point $\hat{\bar{y}}$ is very challenging to calculate analytically for the model structure~\eqref{eq:mlp_2}, because it may not be possible to obtain an equation of the form $\hat{\bar{y}}=\bar{F}_s\left( \bar{u},\,\hat{\boldsymbol{\theta}} \right)$, where $\bar{F}_s(\cdot)$ does not depend on $\hat{\bar{y}}$. In order to use the steady-state information, $\boldsymbol{Z}_{\rm s}$, the work~\cite{amaral2001} used a simpler model structure, proposed in~\cite{narendra1990}, where the autorregressive terms are linear-in-the-parameter, of the form $y(k) = F\left( \boldsymbol{u},\,\boldsymbol{\theta}_u \right) + \boldsymbol{\theta}_y \boldsymbol{y}$, where it is always possible to write $\hat{\bar{y}}=\bar{F}_s\left( \bar{u},\,\boldsymbol{\theta}_u,\,\boldsymbol{\theta}_y \right)$.
No such simplification is required by the method proposed in the present paper.

For polynomial models, that are linear-in-the-parameter, a bi-objective approach was proposed in \cite{barbosa2011}, where the vector of parameters $\hat{\boldsymbol{\theta}}$ was estimated by simultaneously minimizing the functions
\begin{eqnarray} 
  J_{\rm d} &=& \frac{1}{N_{\rm d}} \sum_{k=1}^{N_{\rm d}} [ y(k)-\hat{y}(k) ]^2, \label{eq:cost_abreu2012}\\
  J_{\rm s} &=& \frac{1}{N_{\rm s}} \sum_{j=1}^{N_{\rm s}} [ \bar{y}_j-\hat{\bar{y}}_j ]^2, \label{eq:cost_abreu2012s}
\end{eqnarray}
where $\hat{y}(k)$ corresponds to the free-run simulation, the $\hat{\bar{y}}_j$ values were found analytically, $J_{\rm d}$ and $J_{\rm s}$ are computed over $\boldsymbol{Z}_{\rm d}$ and $\boldsymbol{Z}_{\rm s}$ datasets, respectively. An evolutionary algorithm was used for parameters estimation due to the nonconvexity of the optimization problem due to the fact that free-run simulated values are used in $J_{\rm d}$ \cite{barbosa2011}. One of the aims of the present paper is to remove the necessity of analytically obtaining $\hat{\bar{y}}_j$ prior to the optimization step, as discussed below.

The main drawback of such an approach for training MLP networks~\eqref{eq:mlp_2} is the calculation of $\hat{\bar{y}}$. To see this, consider the {\it measured}\, static point $(\bar{u},\,\bar{y})$. The model fixed point $\hat{\bar{y}}$ must be calculated by solving (\ref{eq:mlp_ss}). This is usually costly as there is no general analytical solution. Alternatively, $\hat{\bar{y}}$ can be obtained numerically by recursively iterating the dynamical model $F\left( \boldsymbol{u},\,\boldsymbol{y},\,\hat{\boldsymbol{\theta}} \right)$ in (\ref{eq:mlp}) with
\begin{eqnarray} 
\boldsymbol{u}=[\bar{u} ~\bar{u} \ldots \bar{u}]^T \in \mathbb{R}^{n_u} \nonumber
\end{eqnarray}
until convergence. The value to which the model converges is $\hat{\bar{y}}$. Because this has to be accomplished at each training step and for each measured static point, the procedure is computationally costly. This procedure is used in the simulated example in Sec.~\ref{sec:results_ex3}, to illustrate the main benefit of the method proposed in this paper: the lower computational cost with good performance.

In the case of NARX polynomial models, the fixed points are clearly related to term clusters and cluster coefficients \cite{aguirre2004}. For this model class, constrained and bi-objective optimization algorithms can be readily used to take advantage of the auxiliary information about the location of fixed points and static nonlinearities \cite{nepomuceno2007,barbosa2011,Aguirre_2019}. Unfortunately such procedures cannot be easily extended to more complex model structures.



The method presented in the next section aims at circumventing the shortcomings pointed out in the two last paragraphs.

\section{Proposed Methodology}
\label{sec:methods}




Many black-box procedures estimate the parameters by minimizing the cost function (\ref{eq:cost_abreu2012}),  with $\hat{y}(k)$ being the \emph{one step ahead} prediction instead of the free-run simulation, over the dynamical (training) data set $\boldsymbol{Z}_{\rm d}$. This choice, although very convenient from a numerical point of view, does not take full advantage of free-run simulations \cite{piroddi2003,Piroddi2008,rib_agu/18}.
One possible solution to use information of the static data set $\boldsymbol{Z}_{\rm s}$ during  parameter estimation is implementing a bi-objective optimization problem where another cost function, say (\ref{eq:cost_abreu2012s}), is simultaneously considered. Although these objective functions may not be considered as ``conflicting'', since the system itself provided both data sets, it is quite hard to have dynamic and static information equally weighted in a single data set when dealing with nonlinear systems. Also, because steady-state historical data can be readily averaged, it is easier to have $\boldsymbol{Z}_{\rm s}$ of better quality than $\boldsymbol{Z}_{\rm d}$.

Computing (\ref{eq:cost_abreu2012s}) requires finding the fixed points $\hat{\bar{y}}_j$ of the model, which is in general  computationally expensive. Thus, the key feature in the proposed methodology is that the fixed points do not need to be explicitly computed neither analytically (e.g. for polynomial models) nor numerically (e.g. for ANN). Instead, here it is proposed to minimize:
\begin{equation}
\label{eq:cost_static}
  \hat{J}_{\rm s} = \frac{1}{N_{\rm s}} \sum_{j=1}^{N_{\rm s}} \left[ \bar{y}_j-F\left( \boldsymbol{\bar\psi}_j,\,\hat{\boldsymbol{\theta}} \right)\right]^2,
\end{equation}
where the hat over ${J}_{\rm s}$ indicates that (\ref{eq:cost_static}) is an approximation to (\ref{eq:cost_abreu2012s}). Here, as before, $\bar{y}_j$ can be seen as a ``target value'' taken from the static data and 
\begin{equation}
  \label{eq:psi_bar}
  \boldsymbol{\bar\psi}_j=\left[1 \,\,\,\,\,\,  \bar{y}_j\dots \bar{y}_j\,\,\,\,\,\, \bar{u}_j \dots  \bar{u}_j \right]^T \in \mathbb{R}^{1+n_y+n_u} .
\end{equation}

\noindent
It should be noted that $F\left( \boldsymbol{\bar\psi}_j,\,\hat{\boldsymbol{\theta}} \right)$ is simply the model one-step-ahead prediction.

\begin{lem}
  \label{lema}
  Both (\ref{eq:cost_abreu2012s}) and (\ref{eq:cost_static}),  computed over $\boldsymbol{Z}_{\rm s}$, for each corresponding input $\bar{u}_j$, have global minima $J_{\rm s}=\hat{J}_{\rm s}=0$ at the model fixed points $\bar{y}_j$, $j=1,\dots , N_s$.
\end{lem}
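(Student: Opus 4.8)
The plan is to exploit the fact that both \eqref{eq:cost_abreu2012s} and \eqref{eq:cost_static} are averages of squared residuals, hence nonnegative, so that $J_{\rm s}\geq 0$ and $\hat{J}_{\rm s}\geq 0$ with $0$ the only candidate for a global minimum. The content of the lemma then reduces to showing that this value $0$ is attainable by each function and, crucially, that it is attained under exactly the same condition, namely that every measured pair $(\bar{u}_j,\bar{y}_j)$ is a fixed point of the model.

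First I would establish the structural identity linking the one-step-ahead predictor appearing in \eqref{eq:cost_static} to the fixed-point map. Substituting the constant regressor $\boldsymbol{\bar\psi}_j$ of \eqref{eq:psi_bar}, in which every lagged output equals $\bar{y}_j$ and every lagged input equals $\bar{u}_j$, into $F$ collapses the autoregressive and exogenous sums exactly as in the passage from \eqref{eq:mlp} to \eqref{eq:mlp_2}; that is, $F(\boldsymbol{\bar\psi}_j,\hat{\boldsymbol{\theta}}) = \bar{F}(\bar{u}_j,\bar{y}_j,\hat{\boldsymbol{\theta}})$. Consequently the residual in \eqref{eq:cost_static}, $\bar{y}_j - F(\boldsymbol{\bar\psi}_j,\hat{\boldsymbol{\theta}})$, equals $\bar{y}_j - \bar{F}(\bar{u}_j,\bar{y}_j,\hat{\boldsymbol{\theta}})$, which is, up to sign, the expression whose vanishing defines the fixed-point equation \eqref{eq:mlp_ss} evaluated at the candidate value $\bar{y}_j$. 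Hence each summand of $\hat{J}_{\rm s}$ vanishes if and only if $\bar{y}_j$ satisfies \eqref{eq:mlp_ss} for the input $\bar{u}_j$, i.e. if and only if $\bar{y}_j$ is a fixed point of the model.

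With this identity in hand the two implications follow directly. If $\hat{J}_{\rm s}=0$, then every $\bar{y}_j$ is a model fixed point for $\bar{u}_j$, so the fixed point $\hat{\bar{y}}_j$ appearing in \eqref{eq:cost_abreu2012s} coincides with $\bar{y}_j$, each residual $\bar{y}_j-\hat{\bar{y}}_j$ vanishes, and $J_{\rm s}=0$. Conversely, if $J_{\rm s}=0$ then $\hat{\bar{y}}_j=\bar{y}_j$; since $\hat{\bar{y}}_j$ is by construction a solution of \eqref{eq:mlp_ss}, the structural identity gives $F(\boldsymbol{\bar\psi}_j,\hat{\boldsymbol{\theta}})=\bar{y}_j$, whence $\hat{J}_{\rm s}=0$. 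Both functions therefore attain their common lower bound $0$ precisely at the parameter values for which the measured steady states are model fixed points, establishing the claim.

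The step I expect to require the most care is the correspondence between $\hat{\bar{y}}_j$ in \eqref{eq:cost_abreu2012s} and the root of \eqref{eq:mlp_ss} certified by $\hat{J}_{\rm s}$, because \eqref{eq:mlp_ss} may admit several solutions: the condition $\hat{J}_{\rm s}=0$ only guarantees that $\bar{y}_j$ is \emph{some} fixed point, whereas $J_{\rm s}$ is written in terms of the specific fixed point to which the free-run simulation converges. I would close this gap by invoking the standing assumption stated before \eqref{eq:mlp_ss} that the trained model is asymptotically stable with a single relevant fixed point, so that the fixed point selected in \eqref{eq:cost_abreu2012s} is unambiguous and the two zero conditions genuinely coincide.
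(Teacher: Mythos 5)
Your proof is correct and rests on exactly the same key step as the paper's: the one-step-ahead prediction computed from the constant regressor $\boldsymbol{\bar\psi}_j$ coincides with the steady-state map evaluated at $(\bar{u}_j,\bar{y}_j)$, so at a model fixed point the residual in $\hat{J}_{\rm s}$ vanishes, and nonnegativity of both sums of squares makes $0$ a global minimum. Beyond the paper's argument you also prove the converse direction ($J_{\rm s}=0$ if and only if $\hat{J}_{\rm s}=0$) and explicitly invoke the standing asymptotic-stability assumption to resolve the possible multiplicity of fixed points --- useful clarifications of details the paper leaves implicit, but not a different route.
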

\begin{proof}
  If $\bar{y}_j, \forall j=1,\dots , N_s$ are fixed points of $F$, then $\bar{y}_j=\hat{\bar{y}}_j$ and therefore
  from (\ref{eq:cost_abreu2012s}) it follows immediately that $J_{\rm s}=0$.
  Now, if $F$ is initialized at $\bar{y}_j$ and $\bar{u}_j$ by
  taking $\boldsymbol{\bar\psi}_j$ from (\ref{eq:psi_bar}), since
  the vector field is null at that point, the one step ahead prediction will necessarily be
  $F( \boldsymbol{\bar\psi}_j,\,\hat{\boldsymbol{\theta}} )=\bar{y}_j, \forall j=1,\dots , N_s$. Hence, at the fixed points $\hat{J}_{\rm s}=0$.
\end{proof}

Hence, although the model fixed points are not explicitly used in (\ref{eq:cost_static}) as in Eq.\,\ref{eq:cost_abreu2012s},
both $J_{\rm s}$ and $\hat{J}_{\rm s}$ reach minima at fixed points. While $J_{\rm s}$ only 
uses static data (measured and from the model), $\hat{J}_{\rm s}$ uses both: the target $\bar{y}_j$ which is a fixed value and the model output $F( \boldsymbol{\bar\psi}_j,\,\hat{\boldsymbol{\theta}} )$ which is obtained by performing {\it one}\, iteration of the model using the target $\boldsymbol{\bar\psi}_j$ as initial condition, as shown in Figure~\ref{fig:method}. During training $\bar{y}_j$ and $\bar{u}_j$ might not yet be exactly a fixed point of the model and the one step ahead prediction will be somewhat different from the target. Hence
after one iteration, Eq.\,\ref{eq:cost_static} is used to evaluate how far did the model move away from the target. Therefore, if the model parameters are estimated by minimizing $\hat{J}_{\rm s}$, this will result in models with equilibria close to $\bar{y}_j$. Based on Lemma~\ref{lema}, the following methodology is proposed.

\begin{figure}
 \centering
  \includegraphics[width=0.95\textwidth]{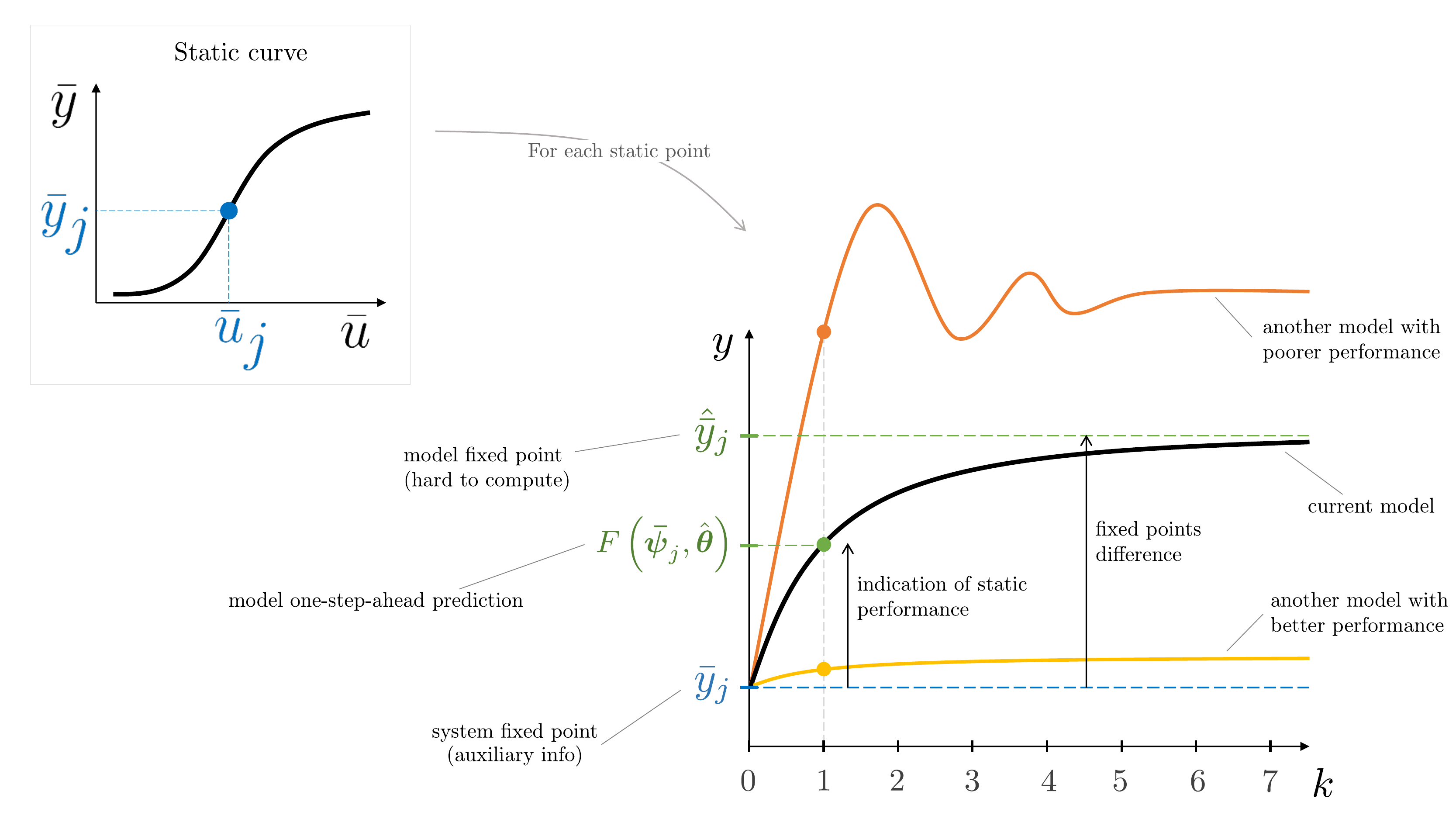}
  \caption{Free-run simulation, from the same initial condition $\bar{y}_j$, of three hypothetical models with different static performance. The one-step-ahead predictions, indicated by dots, are a low-cost indication of static performance -- while the estimation of model fixed points are costly. The models with better (yellow) and poorer (red) static performance are properly quantified by the one-step-ahead prediction, used to train the model.
  }
 \label{fig:method}
\end{figure}


\begin{propos}
  \label{proposition}
  A way of using both dynamic and static information in model building is by minimizing the following cost function, that is a convex combination of $\hat{J}_{\rm s}$ and $J_{\rm d}$:
  \begin{equation}
  \label{eq:cost_sd}
    \boldsymbol{J_{\rm sd}} = (1-\lambda) J_{\rm d} + \lambda \hat{J}_{\rm s},
  \end{equation}
  where $\lambda\in [0,\,1]$ is the parameter that weights the balance between \emph{static} and \emph{dynamical} information.
\end{propos}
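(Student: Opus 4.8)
The plan is to read Proposition~\ref{proposition} as the weighted-sum scalarization of the bi-objective problem $\min_{\boldsymbol\theta}\big(J_{\rm d}(\boldsymbol\theta),\,\hat{J}_{\rm s}(\boldsymbol\theta)\big)$ and to establish the two things the statement actually asserts: that (\ref{eq:cost_sd}) is a genuine convex combination, and that minimizing it responds \emph{simultaneously} to the dynamic data $\boldsymbol{Z}_{\rm d}$ and the static data $\boldsymbol{Z}_{\rm s}$. First I would dispose of the convex-combination claim, which is immediate: for any $\lambda\in[0,1]$ the coefficients $(1-\lambda)$ and $\lambda$ are nonnegative and sum to one, so $J_{\rm sd}$ is by definition a convex combination of the objective \emph{values} $J_{\rm d}$ and $\hat{J}_{\rm s}$. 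I would stress at this point that this is convexity in the space of objective values, not in the parameter space: since $F$ in (\ref{eq:mlp}) is nonlinear in $\boldsymbol\theta$, both $J_{\rm d}$ from (\ref{eq:cost_abreu2012}) and $\hat{J}_{\rm s}$ from (\ref{eq:cost_static}) are generally non-convex functions of $\boldsymbol\theta$, and so is $J_{\rm sd}$.

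Next I would treat the two endpoints to show that the single scalar $\lambda$ genuinely interpolates between the two sources of information. At $\lambda=0$ one has $J_{\rm sd}=J_{\rm d}$, i.e.\ the ordinary black-box fit to $\boldsymbol{Z}_{\rm d}$; at $\lambda=1$ one has $J_{\rm sd}=\hat{J}_{\rm s}$, and by Lemma~\ref{lema} any minimizer drives the one-step-ahead predictions onto the static targets $\bar{y}_j$, hence places the model fixed points at the prescribed steady states without ever computing them. For intermediate $\lambda$ both terms are active, and the crux is the standard scalarization argument: if $\boldsymbol\theta^\star$ minimizes $J_{\rm sd}$ for some $\lambda\in(0,1)$, then no $\boldsymbol\theta$ can satisfy both $J_{\rm d}(\boldsymbol\theta)\le J_{\rm d}(\boldsymbol\theta^\star)$ and $\hat{J}_{\rm s}(\boldsymbol\theta)\le\hat{J}_{\rm s}(\boldsymbol\theta^\star)$ with at least one strict inequality, for such a $\boldsymbol\theta$ would yield a strictly smaller value of (\ref{eq:cost_sd}) and contradict the optimality of $\boldsymbol\theta^\star$. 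Thus every such minimizer is (weakly) Pareto-optimal for the bi-objective problem, which is precisely the sense in which (\ref{eq:cost_sd}) uses \emph{both} dynamic and static information.

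I would close by noting consistency across the family: whenever a common zero exists---a model that reproduces $\boldsymbol{Z}_{\rm d}$ and has its fixed points at the $\bar{y}_j$, so that $J_{\rm d}=\hat{J}_{\rm s}=0$---that point is a global minimizer of $J_{\rm sd}$ for \emph{every} $\lambda$, so the whole parametric family (\ref{eq:cost_sd}) is mutually consistent and contains the black-box and static-only problems as its two extremes. This ties the proposition back to Lemma~\ref{lema}, which supplies the only non-trivial ingredient, namely that driving $\hat{J}_{\rm s}$ down is equivalent to positioning the (uncomputed) fixed points.

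The main obstacle, and the point on which I would be careful not to over-claim, is exactly the gap between the two notions of convexity above. Weighted-sum scalarization guarantees that minimizers of $J_{\rm sd}$ are Pareto-optimal, but because the landscape in $\boldsymbol\theta$ is non-convex, it does \emph{not} guarantee that sweeping $\lambda$ over $[0,1]$ recovers the entire Pareto front, nor that the minimization is globally tractable---only that whatever (local) minimizer the optimizer returns trades dynamic against static error in a well-defined way. The substantive content the proposition can safely rest on is therefore Lemma~\ref{lema} together with the elementary scalarization property; any stronger statement about fully covering the trade-off surface is heuristic and belongs to the numerical discussion rather than to the proof.
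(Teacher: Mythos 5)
Your proposal is correct, and it actually supplies more than the paper does: the paper gives no formal proof of Proposition~\ref{proposition} at all. The proposition is introduced with the words ``Based on Lemma~\ref{lema}, the following methodology is proposed,'' and its justification is distributed informally over the surrounding text---the observation that $\lambda=0$ recovers the black-box fit, that $\lambda\to 1$ progressively weights the steady-state data, and the Pareto front of Figure~\ref{fig:pareto} obtained empirically by sweeping $\lambda$. Your write-up makes this implicit reasoning explicit: you verify the (trivial) convex-combination claim, anchor the static endpoint in Lemma~\ref{lema} exactly as the paper intends, and add the standard weighted-sum scalarization argument showing that any minimizer of \eqref{eq:cost_sd} for $\lambda\in(0,1)$ is Pareto-optimal for the pair $\left(J_{\rm d},\,\hat{J}_{\rm s}\right)$---a step the paper never states, but which is precisely what licenses its claim that the method uses both sources of information \emph{simultaneously}. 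Your closing caveat also goes beyond the paper and is well placed: since $J_{\rm d}$ and $\hat{J}_{\rm s}$ are non-convex in $\boldsymbol{\theta}$ for MLP structures, sweeping $\lambda$ is not guaranteed to trace the entire Pareto front, so Figure~\ref{fig:pareto} should be read as an empirical rather than a guaranteed picture of the trade-off. The one point to watch: your assertion that at $\lambda=1$ ``any minimizer drives the one-step-ahead predictions onto the static targets'' tacitly assumes that the minimum value $\hat{J}_{\rm s}=0$ is attainable within the model class; if it is not, minimizers only approximate the fixed-point placement---which is all that Lemma~\ref{lema} guarantees, and all that the methodology needs.
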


The use of (\ref{eq:cost_static}) instead of (\ref{eq:cost_abreu2012s}) in (\ref{eq:cost_sd}) is the key-point of the proposed method. This change reduces the computational cost by approximately three orders of magnitude while keeping the model performance competitive.

When $\lambda=0$ the estimation algorithm only considers the dynamical information (e.g. \emph{black-box} approach) and as $\lambda\to 1$ the influence of the auxiliary information about the system in steady-state gradually increases, as shown in Figure~\ref{fig:pareto}. Generally, competitive models can be achieved by a suitable balance of the information in $\boldsymbol{Z}_{\rm s}$ and $\boldsymbol{Z}_{\rm d}$ datasets \cite{barbosa2011,barroso2007}. 


Finding an adequate value for $\lambda$ is carried out in the examples using two decision makers. The first, proposed in \cite{barroso2007}, considers the free-run simulation error over $\boldsymbol{Z}_{\rm d}$, in which the model with the minimum correlated error with the system output is chosen. The second measures the root mean squared error (RMSE) of the free-run simulation over $\boldsymbol{Z}_{\rm t}$ and chooses the smallest. 

After choosing the value of $\lambda$, the minimization of (\ref{eq:cost_sd}) can be solved by standard  algorithms, where the choice often depends on the model structure, as detailed next.


\begin{figure}
 \centering
  \includegraphics[width=0.70\textwidth]{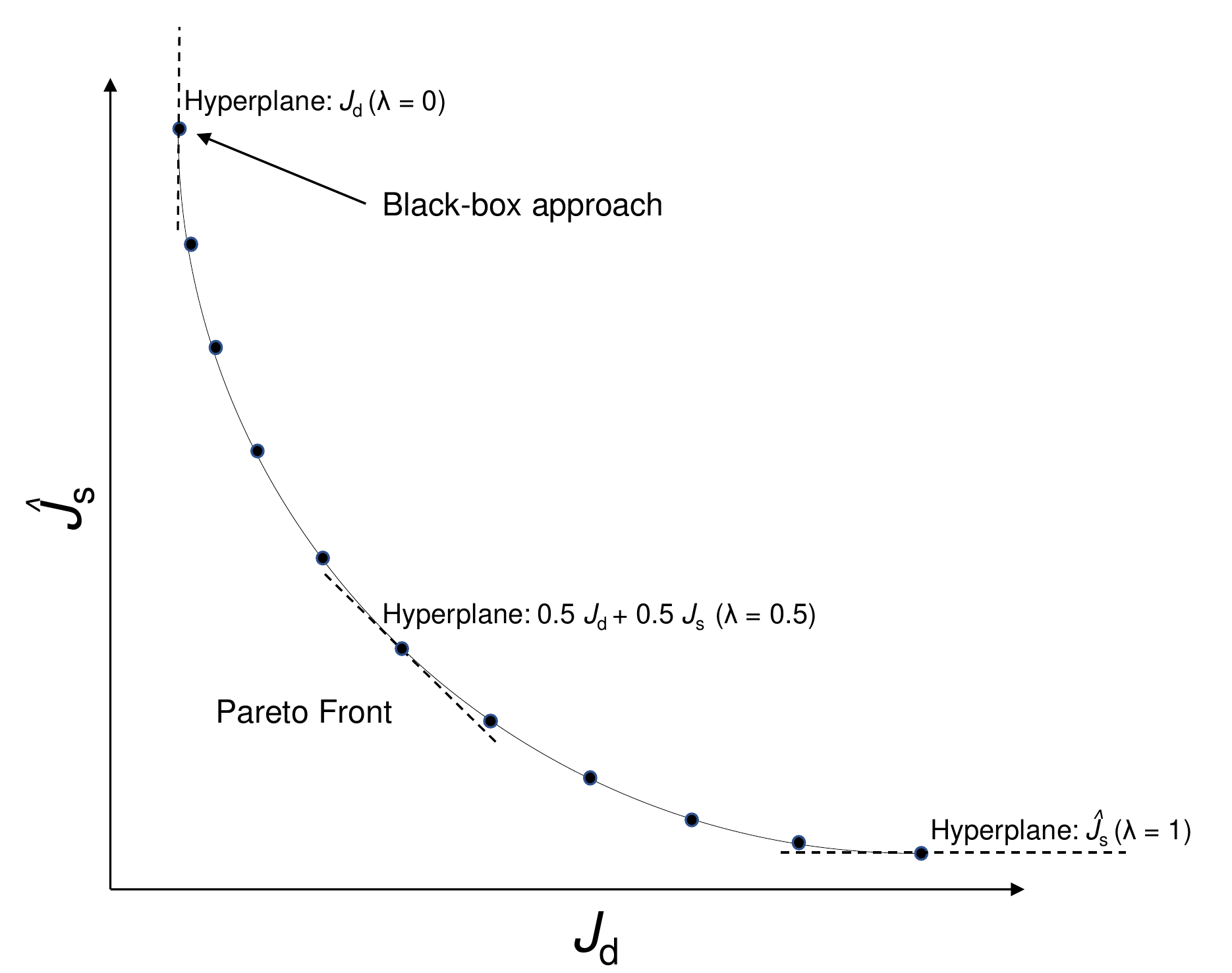}
  \caption{Pareto Front obtained by minimizing (\ref{eq:cost_sd}) using different values for $\lambda$. Each dot corresponds to a model with a different balance between the used cost functions.}
 \label{fig:pareto}
\end{figure}

\subsection{Linear-in-the-parameter models}

For linear-in-the-parameter models, the weighted least squares (WLS) can be used with:
\begin{equation}
\label{eq:weightMatrix_WLS}
  W =
  \begin{bmatrix}
    (1-\lambda)\boldsymbol{I}_{N_{\rm d}} & \boldsymbol{0} \\
    \boldsymbol{0} & \lambda\boldsymbol{I}_{N_{\rm s}}
  \end{bmatrix},\,
  \boldsymbol{Y} =
  \begin{bmatrix}
    \boldsymbol{y} \\ \boldsymbol{\bar{y}}
  \end{bmatrix},\,
  \boldsymbol{\Psi} =
  \begin{bmatrix}
    \boldsymbol{\psi} \\ \boldsymbol{\bar{\psi}}
  \end{bmatrix},\,
\end{equation}
where $\boldsymbol{I}_{N}\in\mathbb{R}^{N\times N}$ is the identity matrix, $\boldsymbol{0}$ is the null matrix of appropriate dimension and the pairs $(\boldsymbol{\psi},\boldsymbol{y})$ and $(\boldsymbol{\bar{\psi}},\boldsymbol{\bar{y}})$ are available in $\boldsymbol{Z}_{\rm d}$ and $\boldsymbol{Z}_{\rm s}$, respectively. The solution is given by $\hat{\boldsymbol{\theta}}=(\boldsymbol{\Psi}^\top W\,\boldsymbol{\Psi})^{-1} \boldsymbol{\Psi}^\top W\,\boldsymbol{Y} $.

It is possible to prove that the above solution for polynomial models is rigorously equivalent to that presented by~\cite{nepomuceno2007}. This is shown numerically in Sec.~\ref{sec:piroddi}.

\subsection{Nonlinear-in-the-parameter models}

Fortunately, unlike in Ref.~\cite{nepomuceno2007}, the proposed methodology can be also used to estimate parameters of models that are nonlinear-in-the-parameters. This can be done using the following error vector:
\begin{equation}
\label{eq:MLP_errorTerm}
  \boldsymbol{E} =
  \begin{bmatrix}
    \,\,\,(1-\lambda)\left[\boldsymbol{y} - \boldsymbol{F} \left( \boldsymbol{\psi},\,\boldsymbol{\hat\theta} \right)\right] \,\,\, \\
    \,\,\,\lambda\left[\boldsymbol{\bar y} - \boldsymbol{F} \left( \boldsymbol{\bar\psi},\,\hat{\boldsymbol{\theta}} \right)\right] \,\,\,
  \end{bmatrix} \quad \in \quad \mathbb{R}^{(N_{\rm d}+N_{\rm s})},
\end{equation}
where $\boldsymbol{F}(\cdot)$ is the vector of the model outputs, $\boldsymbol{F}(\boldsymbol{\psi},\,\boldsymbol{\hat\theta})=[F( \boldsymbol{\psi}(0),\,\hat{\boldsymbol{\theta}}) ~F( \boldsymbol{\psi}(1),\,\hat{\boldsymbol{\theta}}) \cdots F( \boldsymbol{\psi}(N_d-1),\,\hat{\boldsymbol{\theta}})]^T$ and $\boldsymbol{F}(\boldsymbol{\bar\psi},\,\boldsymbol{\hat\theta})=[F( \boldsymbol{\bar\psi}_1,\,\hat{\boldsymbol{\theta}}) ~F( \boldsymbol{\bar\psi}_2,\,\hat{\boldsymbol{\theta}}) \cdots F( \boldsymbol{\bar\psi}_{N_s},\,\hat{\boldsymbol{\theta}})]^T$.


The error vector~\eqref{eq:MLP_errorTerm} allows the parameter estimation by minimizing $\boldsymbol{J_{\rm sd}}$ (Proposition~\ref{proposition}) using standard algorithms, like weighted BP and Levenberg-Marquardt -- used in this paper. The main advantage of~\eqref{eq:MLP_errorTerm} is that the static part is much easier to compute than previous methods that estimate the fixed point of the \emph{model}. To compute $\boldsymbol{E}$, the model $F(\cdot)$ must be iterated  $(N_d+N_s)$ times. In previous procedures, to estimate a single fixed point $\hat{\bar{y}}_j$ the model had to be iterated many times, e.g. $F(F(F(F(\dots F(\cdot)))))$ until steady-state was reached. If the model required, on average, $k_{ss}$ iterations to reach  steady-state, the computation of the error vector by previous methods would require $(N_d+N_s \times k_{ss})$ model iterations, compared to $(N_d+N_s)$ in the proposed methodology.

The main steps of the proposed methodology are summarized below:
\begin{enumerate}
    \item Begin with a given model structure $\boldsymbol{F}(\cdot)$ and initial parameter vector  $\hat{\boldsymbol{\theta}}_0$;
    \item With $\boldsymbol{Z}_{\rm d}$, compute $\left[\boldsymbol{y} - \boldsymbol{F} \left( \boldsymbol{\psi},\,\hat{\boldsymbol{\theta}}_0 \right)\right]$;
    \item With $\boldsymbol{Z}_{\rm s}$, compute the vector of independent variables $\boldsymbol{\bar\psi}$ \eqref{eq:psi_bar} and $\left[\boldsymbol{\bar y} - \boldsymbol{F} \left( \boldsymbol{\bar\psi},\,\hat{\boldsymbol{\theta}}_0 \right)\right]$;
    \item Compute the error vector $\boldsymbol{E}$ as in~\eqref{eq:MLP_errorTerm} for values within $\lambda\in [0,1]$;
    \item Run an optimization algorithm (e.g. Levenberg-Marquardt) to minimize $\boldsymbol{E}$ and obtain $\hat{\boldsymbol{\theta}}(\lambda)$;
    \item Use some criterion to choose $\lambda$ that gives the best $\hat{\boldsymbol{\theta}}$ (decision making).
\end{enumerate}
In what follows,
the above procedure is illustrated in numerical examples and in the estimation of downhole pressure soft-sensors.




\section{Results and Discussion}
\label{sec:results}

The proposed methodology is now applied to two simulated systems (taken from \cite{piroddi2003} and \cite{jakubek2008}) and to a real deep water oil well process \cite{aguirre2017cep}. Following the main aim of the paper, all cases require a grey-box approach to achieve suitable results. The proposed methodology is compared with other methods in terms of accuracy and computational cost.

The first example shows that, for polynomial models, the results are equivalent to those using \cite{nepomuceno2007}. The second example shows that, with MLP models, the computational cost of using the proposed methodology is much less than employing the available grey-box approaches that can be applied to such models \cite{barbosa2011}. The last result shows that the method can attain competitive results on a real deep water oil well process.

\subsection{Simulated Example 1}
\label{sec:piroddi}

Consider the dynamical nonlinear system \cite{piroddi2003}:
\begin{eqnarray} 
\label{eq:ex1_sys}
  w(k) &=& 0.75 w(k-2) + 0.25 u(k-1) - 0.2 w(k-2) u(k-1),\nonumber \\
  y(k) &=& w(k) + e(k),
\end{eqnarray}
where $u \in \mathbb{R}$ is the input, $w \in \mathbb{R}$ the noiseless output, $y \in \mathbb{R}$ the output with the noise $e(k)~\sim~\rm{WGN}$(0,\,0.1${\sigma_w}$), where $\rm{WGN}$ stands for the White Gaussian Noise.

Four datasets were obtained from system (\ref{eq:ex1_sys}):   $\boldsymbol{Z}_{\rm d}$, $\boldsymbol{Z}_{\rm t}$ and  $\boldsymbol{Z}_{\rm s}$) were used in parameter estimation; $\boldsymbol{Z}_{\rm v}$ was used to compare estimation techniques. In order to represent a common situation in practice, the training and testing datasets, $\boldsymbol{Z}_{\rm d}$ and $\boldsymbol{Z}_{\rm t}$, respectively, were acquired over a limited operating range, however, with a persistently exciting input. The steady-state dataset $\boldsymbol{Z}_{\rm s}$ was obtained over a wider operating range.

\begin{rem}
  The validation dataset $\boldsymbol{Z}_{\rm v}$ was simulated over a wider operating range to allow a more thorough comparison. In many practical problems this dataset is not available.
\end{rem}

The dynamical training dataset $\boldsymbol{Z}_{\rm d}$ and testing dataset $\boldsymbol{Z}_{\rm t}$ (used only for decision-making process) were simulated with $u\sim\rm{WGN}(-0.02,\,0.04)$, $N_{\rm d}=100$, $N_{\rm t}=400$ and $ e \sim \rm{WGN}$(0,\,0.1$\sigma_w$) . The static dataset $\boldsymbol{Z}_{\rm s}$ was obtained analytically with $N_{\rm s}=50$ equally spaced values in the range $u\in [-1,\,3]$ and with an additive zero mean noise with  $\sigma=0.02$. The validation dataset $\boldsymbol{Z}_{\rm v}$ was simulated over a broader operating range, with $N_{\rm v} = 2000$ samples and without output noise ($e=0$). Note that $\boldsymbol{Z}_{\rm v}$ and $\boldsymbol{Z}_{\rm s}$ have inputs with wide spectral range, but $\boldsymbol{Z}_{\rm d}$ and $\boldsymbol{Z}_{\rm t}$ lack  information in operating ranges far from $y \approx 0$.

Figure~\ref{fig:ex1_data_piroddi_zdzszv} compares the analytic static curve with all datasets in the $(u,y)$ plane. Clearly, information in $(\boldsymbol{Z}_{\rm d},\boldsymbol{Z}_{\rm t})$ and $\boldsymbol{Z}_{\rm s}$ are complementary.

\begin{figure}
 \centering
  \includegraphics[width=1.0\textwidth]{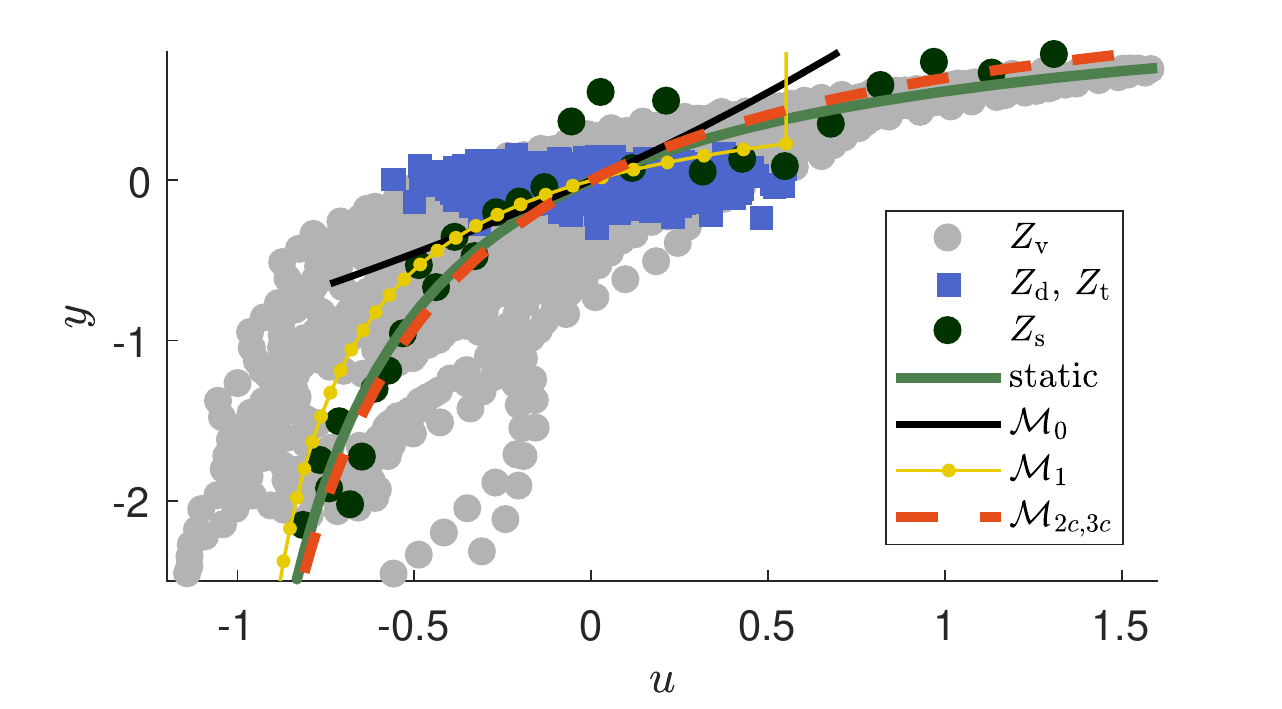}
  \caption{Comparison of the analytical static curve (thick green line) of system (\ref{eq:ex1_sys}) with datasets ($\boldsymbol{Z}_{\rm v}$, $\boldsymbol{Z}_{\rm d}$, $\boldsymbol{Z}_{\rm t}$, $\boldsymbol{Z}_{\rm s}$) and with the estimated static behavior of the models ($\mathcal{M}_0$, $\mathcal{M}_1$, $\mathcal{M}_{2c}$ and $\mathcal{M}_{3c}$). The static behavior of $\mathcal{M}_{2c}$ and $\mathcal{M}_{3c}$ were equal, with precision of $10^{-10}$.}
 \label{fig:ex1_data_piroddi_zdzszv}
\end{figure}


The following polynomial NARX structure was obtained from the data produced by system~\eqref{eq:ex1_sys} 
\begin{align} \label{eq:ex1_m0}
  y(k) & = \theta_1 y(k-2) + \theta_2 u(k-1) + \theta_3 u(k-1)y(k-2) \nonumber \\
       & + \theta_4 u(k-1)y(k-1) + \theta_5 u(k-2)y(k-1) ,
\end{align}
using the procedure proposed in \cite{mendes2001}. Four parameter estimation techniques were tested,
each one yielding a different model family: $\mathcal{M}_1$ obtained by the Constrained Least Squares (CLS) as in \cite{aguirre2004}, $\mathcal{M}_2$ obtained as in~\cite{nepomuceno2007} and $\mathcal{M}_3$ found using  Weighted Least Squares (WLS) as described in Sec.~\ref{sec:methods}. $\mathcal{M}_0$ was the model obtained following a black-box approach with ordinary Least Squares (LS), shown to illustrate the disadvantages of not using auxiliary information in this example.

For the sake of comparison, nine models of each $\mathcal{M}_2$ and $\mathcal{M}_3$ were estimated, with $\lambda\in\{0.1,\,0.2,\,0.3,\,0.4,\,0.5,\,0.6,\,0.7,\,0.8,\,0.9\}$, and  two decision makers were adopted: minimum correlation \cite{barroso2007} ($\mathcal{M}_{2a}$,~$\mathcal{M}_{3a}$); and minimum free-run RMSE over $\boldsymbol{Z}_{\rm t}$ ($\mathcal{M}_{2b}$,~$\mathcal{M}_{3b}$). In addition, the best ($\mathcal{M}_{2c}$, $\mathcal{M}_{3c}$) and worst ($\mathcal{M}_{2d}$,~$\mathcal{M}_{3d}$) models in terms of RMSE over $\boldsymbol{Z}_{\rm v}$ are shown. Table~\ref{tab:models} summarizes the parameter estimation techniques used to identify the three studied examples.

\begin{table}
  \caption{Parameter estimation techniques per model family.}
  \label{tab:models}
  \centering{
  \begin{tabular}[t]{l|ccccc}
    & \bf{$\mathcal{M}_0$} & \bf{$\mathcal{M}_1$} & \bf{$\mathcal{M}_2$} & \bf{$\mathcal{M}_3$} \\ \hline
    {\bf Ex.1} \cite{piroddi2003} & LS & CLS \cite{aguirre2004} & \cite{nepomuceno2007} & WLS (Sec.~\ref{sec:methods}) \\
    {\bf Ex.2} \cite{jakubek2008} & BP & -- & evolutionary \cite{barbosa2011} & BP (Sec.~\ref{sec:methods}) \\
    {\bf Ex.3} \cite{aguirre2017cep} & BP & -- & evolutionary \cite{barbosa2011} & BP (Sec.~\ref{sec:methods}) \\
    {aux.info} & -- & constraint & multi-obj. & multi-obj. \\ \hline
  \end{tabular}}
\end{table}

Part of the free-run simulation over validation dataset $\boldsymbol{Z}_{\rm v}$ is shown in Figure~\ref{fig:ex1_data_piroddi_results}. The black-box model ($\mathcal{M}_0$) shows that $(\boldsymbol{Z}_{\rm d},\boldsymbol{Z}_{\rm t})$ are not sufficient to achieve a good performance over a wide operating range. $\mathcal{M}_1$ is not shown because it becomes unstable over $\boldsymbol{Z}_{\rm v}$, as a consequence of imposing inaccurate auxiliary information via hard constraints.  

The bi-objective estimation, with $\mathcal{M}_2$ and $\mathcal{M}_3$, allowed a better trade-off between auxiliary information and the dynamical data, as shown in Figure~\ref{fig:ex1_data_piroddi_zdzszv} with the best models $\mathcal{M}_{2c}$ and $\mathcal{M}_{3c}$ simulations. The root mean squared error (RMSE) over free-run simulation on the validation dataset is summarized in Table~\ref{tab:ex_results_rmse}. The performance of both estimation techniques ($\mathcal{M}_2$, $\mathcal{M}_3$) were rigorously equivalent as well as the computational cost. It is worth to mention that, using auxiliary information, the proposed approach found a model with good performance over a much broader operating range dataset ($\boldsymbol{Z}_{\rm v}$), when compared with the dataset used for  parameters estimation $(\boldsymbol{Z}_{\rm d})$,  without computing fixed points. Besides, the identified models $\mathcal{M}_{3b}$ and $\mathcal{M}_{3c}$ achieved good static and dynamic performance.

\begin{figure}
 \centering
  \includegraphics[width=1.0\textwidth]{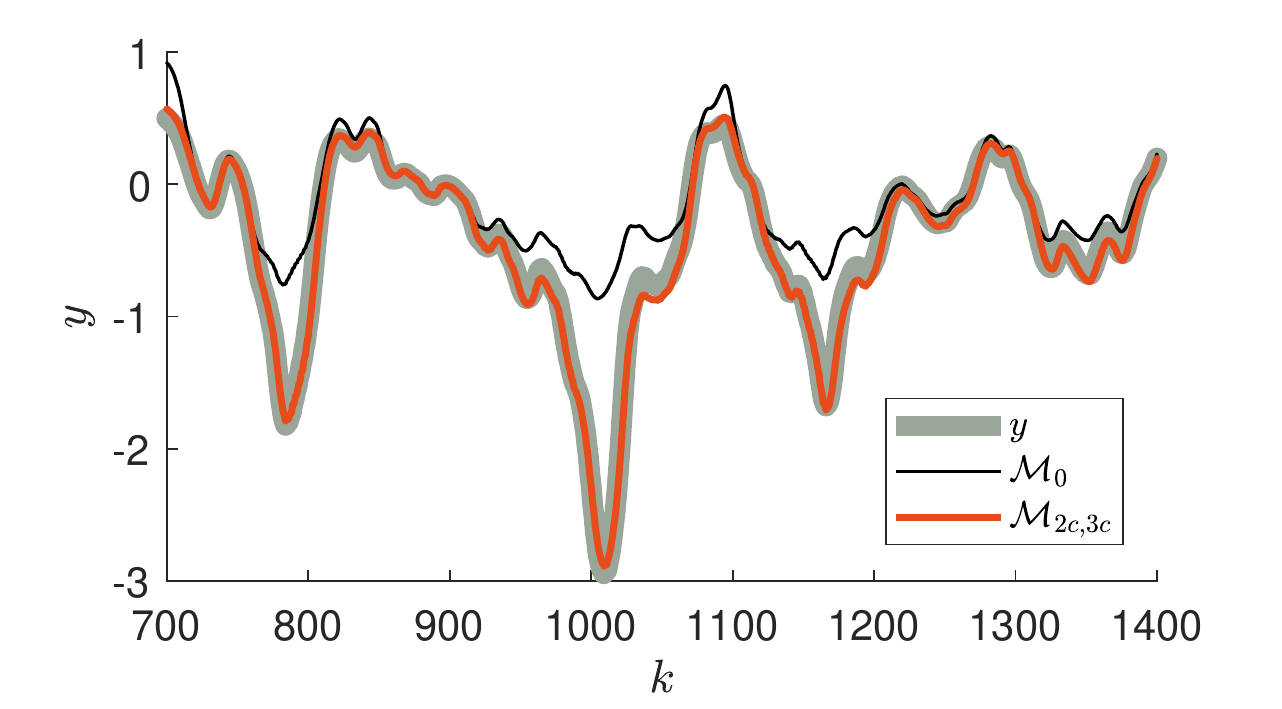}
  \caption{Free-run simulation over validation dataset $\boldsymbol{Z}_{\rm v}$, where system output $y$ is the bold grey line; $\mathcal{M}_0$ (\emph{black-box} NARX poynomial model) the black line;  $\mathcal{M}_{2c}$ (\emph{grey-box} polynomial model) and $\mathcal{M}_{3c}$ (\emph{grey-box} MLP model) the red line.
  }
 \label{fig:ex1_data_piroddi_results}
\end{figure}

\begin{table}
  \caption{Root mean squared error (RMSE) of each model evaluated over validation dataset $\boldsymbol{Z}_{\rm v}$ in a free-run simulation.}
  \label{tab:ex_results_rmse}
  \centering\scriptsize{
  \begin{tabular}[t]{l|cccl}
    & {\bf Ex.1 \cite{piroddi2003}} & {\bf Ex.2 \cite{jakubek2008}} & {\bf Ex.3 \cite{aguirre2017cep}} & {\bf Choice of $\lambda$} \\ \hline\hline
    {\bf$\mathcal{M}_{0}$} & $0.4649$ & $0.3182$ & $6.7420$ & -- \\ \hline 
    {\bf$\mathcal{M}_{1}$} & $\gg 10^2$ & -- & -- & -- \\ \hline
    {\bf$\mathcal{M}_{2a}$} & $38.349$ ($\lambda=0.8$) & $0.2211$ ($\lambda=0.7$) & -- & {min corr. \cite{barroso2007}} \\
    {\bf$\mathcal{M}_{2b}$} & $0.0557$ ($\lambda=0.1$) & $0.2211$ ($\lambda=0.7$) & -- & {min RMSE over $\boldsymbol{Z}_{\rm t}$} \\
    {\bf$\mathcal{M}_{2c}$} & $0.0557$ ($\lambda=0.1$) & $0.1190$ ($\lambda=0.9$) & -- & {min RMSE over $\boldsymbol{Z}_{\rm v}$} \\
    {\bf$\mathcal{M}_{2d}$} & $\gg 10^2$ ($\lambda=0.9$) & $0.2899$ ($\lambda=0.1$) & -- & {max RMSE over $\boldsymbol{Z}_{\rm v}$} \\ \hline
    {\bf$\mathcal{M}_{3a}$} & $38.349$ ($\lambda=0.7$) & $0.0992$ ($\lambda=0.1$) & $23.9496$ $(\lambda=0.68)$ & {min corr. \cite{barroso2007}} \\
    {\bf$\mathcal{M}_{3b}$} & $0.0557$ ($\lambda=0.1$) & $0.0992$ ($\lambda=0.1$) & $10.9986$ $(\lambda=0.46)$ & {min RMSE over $\boldsymbol{Z}_{\rm t}$} \\
    {\bf$\mathcal{M}_{3c}$} & $0.0557$ ($\lambda=0.1$) & $0.0992$ ($\lambda=0.1$) & $3.7285$ $(\lambda=0.54)$ & {min RMSE over $\boldsymbol{Z}_{\rm v}$} \\
    {\bf$\mathcal{M}_{3d}$} & $\gg 10^2$ ($\lambda=0.4$) & $0.4592$ ($\lambda=0.9$) & $173.949$ $(\lambda=0.98)$ & {max RMSE over $\boldsymbol{Z}_{\rm v}$} \\ \hline
  \end{tabular}}
\end{table}

\subsection{Simulated Example 2} 
\label{sec:results_ex3}

This example uses the following system \cite{jakubek2008}:
\begin{align} 
\label{eq:ex3_sys}
  w(k) =& \tan^{-1}\bigg( 1.7826w(k-1) -0.8187w(k-2) + 0.01867u(k-1) + \nonumber \\ &  \ \ \ \ \ \ \ \  \ + 0.01746u(k-2) \bigg) ,\nonumber \\
  y(k) =& w(k) + e(k),
\end{align}
where the variables are defined as before.


Four datasets were obtained from (\ref{eq:ex3_sys}), in similar fashion as for Example~1 (Sec.~\ref{sec:piroddi}). The training dataset $\boldsymbol{Z}_{\rm d}$ ($N_{\rm d} = 1700$) was obtained for $u \sim {\rm WGN}(0,\,0.02)$. A white gaussian noise $e \sim {\rm WGN}(0,\,0.01\sigma_w^2)$ was added to the output. The \emph{test} dataset $\boldsymbol{Z}_{\rm t}$ was obtained in the same way, but with $N_{\rm t} = 300$ samples.
The noise-free \emph{validation} dataset $\boldsymbol{Z}_{\rm v}$ has $N_{\rm v} = 2000$. The noisy ${\rm WGN}(0,\,0.01\sigma_{\bar{w}^2})$
static data $\boldsymbol{Z}_{\rm s}$ is presented in Figure~\ref{fig:ex3_data_wiener_static} which also indicates the operating range of $\boldsymbol{Z}_{\rm d}$, $\boldsymbol{Z}_{\rm t}$ and~$\boldsymbol{Z}_{\rm v}$.


Data from \eqref{eq:ex3_sys} were used to train the following MLP structure 
\begin{align} \label{eq:ex3_m0}
  y(k) & = \theta_1 + \theta_2 \tanh \bigg( \theta_3 + \theta_4 y(k-1) + \theta_5 y(k-2) + \theta_6 u(k-1) + \theta_7 u(k-2) \bigg) .
\end{align}

For the sake of comparison, three models were estimated: $\mathcal{M}_0$, $\mathcal{M}_2$ and $\mathcal{M}_3$. As a reference for comparison, the black-box $\mathcal{M}_0$ uses the BP with the Levenberg-Marquardt training algorithm over $\boldsymbol{Z}_{\rm d}$ only. $\mathcal{M}_2$ implements the grey-box procedure available in~\cite{barbosa2011}, where the parameters are estimated by minimizing the bi-objective problem~\eqref{eq:cost_abreu2012} and~\eqref{eq:cost_abreu2012s}\footnote{Unlike~\cite{barbosa2011}, that uses also the simulation error to fit the parameters, in this work the prediction error~\eqref{eq:cost_abreu2012} is used in all cases.}.
The estimation of the fixed point of the \emph{model} at each point in $\boldsymbol{Z}_{\rm s}$ is necessary for evaluating $J_{\rm s}$~\eqref{eq:cost_abreu2012s}. In the present example, this is done by simulating the system for 15 steps ahead for each fixed point. Due to the nonconvexity of the problem \cite{barbosa2011}, a genetic algorithm is used for training, where the black-box parameters ($\mathcal{M}_0$) are included in the initial \emph{population} of solutions and the number of generations are limited to 21. $\mathcal{M}_3$ is trained using the proposed methodology (Sec.~\ref{sec:methods}), with weighted BP~\eqref{eq:MLP_errorTerm} and the Levenberg-Marquardt algorithm.


Free-run simulation over validation dataset $\boldsymbol{Z}_{\rm v}$ is shown in Figure~\ref{fig:ex2_data_jakubek_results}, where $\mathcal{M}_0$ achieved poor results and $(\mathcal{M}_{2c},\,\mathcal{M}_{3c})$ reached better performance as expected due to the grey-box approach. The RMSE values are shown in Table~\ref{tab:ex_results_rmse}, where $\mathcal{M}_{3c}$ performed slightly better than $\mathcal{M}_{2c}$.

The difference between $\mathcal{M}_2$ and $\mathcal{M}_3$ procedures is stressed in Figure~\ref{fig:ex2_data_jakubek_trainTime} that shows the trade-off between the \emph{training time}\footnote{The computational processing time was performed in a PC with processor Intel(R) Core(TM) i7-2860QM CPU @ 2.50GHz, 8GB DDR3 1333MHz running Matlab(R) software. The parameters of $\mathcal{M}_2$ were estimated running a parallel pool with 8 workers.} and the \emph{model performance}\footnote{The \emph{performance} of the model is measured by the RMSE of the free-run simulation over the validation dataset.}. The increase in the dispersion of RMSE values for $\mathcal{M}_3$ when compared to that of $\mathcal{M}_2$ is generously compensated by computing time which is about three orders of magnitude shorter. Comparing the best case of each, the proposed procedure yielded lower RMSE in this example.

\begin{rem}
  It is worth pointing out that the best RMSE achieved by $\mathcal{M}_2$ can be improved (e.g. increasing the maximum of generations or some other parameter in the genetic algorithm), but the objective here is to emphasize the difference between the \emph{computational cost} of both grey-box procedures. The estimation of the fixed points of \emph{models} $\mathcal{M}_2$ takes around 5 seconds for evaluating the objective function, while the procedure proposed in this paper, used to obtain $\mathcal{M}_3$, takes just a few milliseconds.
\end{rem}


\begin{figure}
 \centering
  \includegraphics[width=1.0\textwidth]{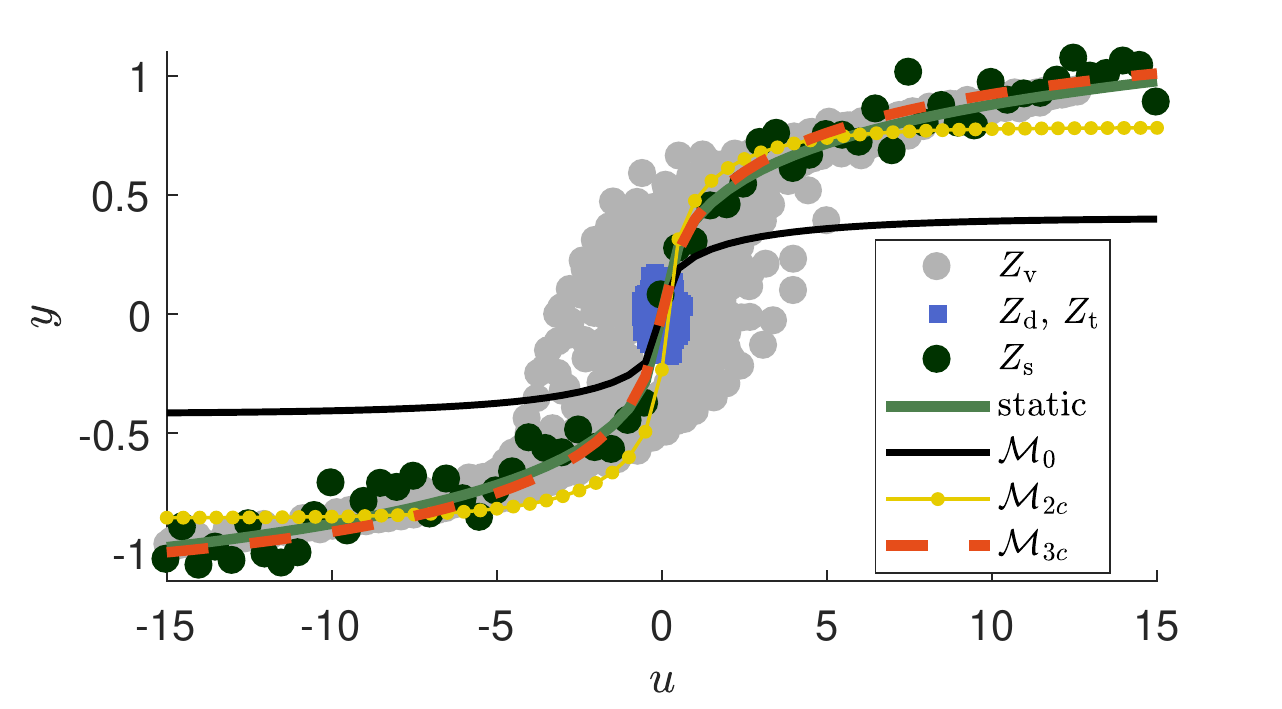}
  \caption{Comparison of the analytical static curve (thick green line) of the system (\ref{eq:ex3_sys}) with datasets ($\boldsymbol{Z}_{\rm v}$, $\boldsymbol{Z}_{\rm d}$, $\boldsymbol{Z}_{\rm t}$, $\boldsymbol{Z}_{\rm s}$) and with the estimated static behavior of the models $\mathcal{M}_0$, $\mathcal{M}_{2c}$ and $\mathcal{M}_{3c}$.}
 \label{fig:ex3_data_wiener_static}
\end{figure}

\begin{figure}
 \centering
  \includegraphics[width=1.0\textwidth]{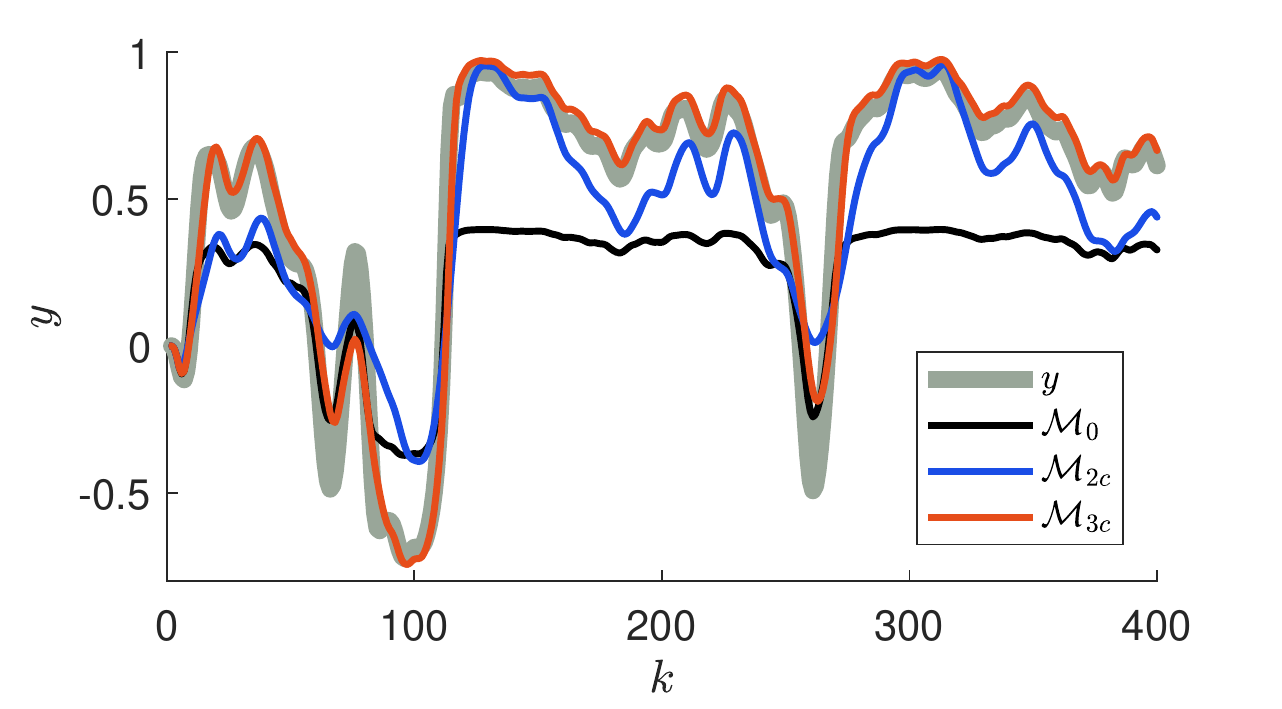}
  \caption{Free-run simulation over validation dataset $\boldsymbol{Z}_{\rm v}$ of the Simulated Example~2, where system output $y_v$ is the bold line in grey, evolutionary approach model $\mathcal{M}_2$ is denoted by the blue line and the proposed $\mathcal{M}_3$ is the red line. Model $\mathcal{M}_0$ represents the black-box approach.}
 \label{fig:ex2_data_jakubek_results}
\end{figure}

\begin{figure}
 \centering
  \includegraphics[width=.9\textwidth]{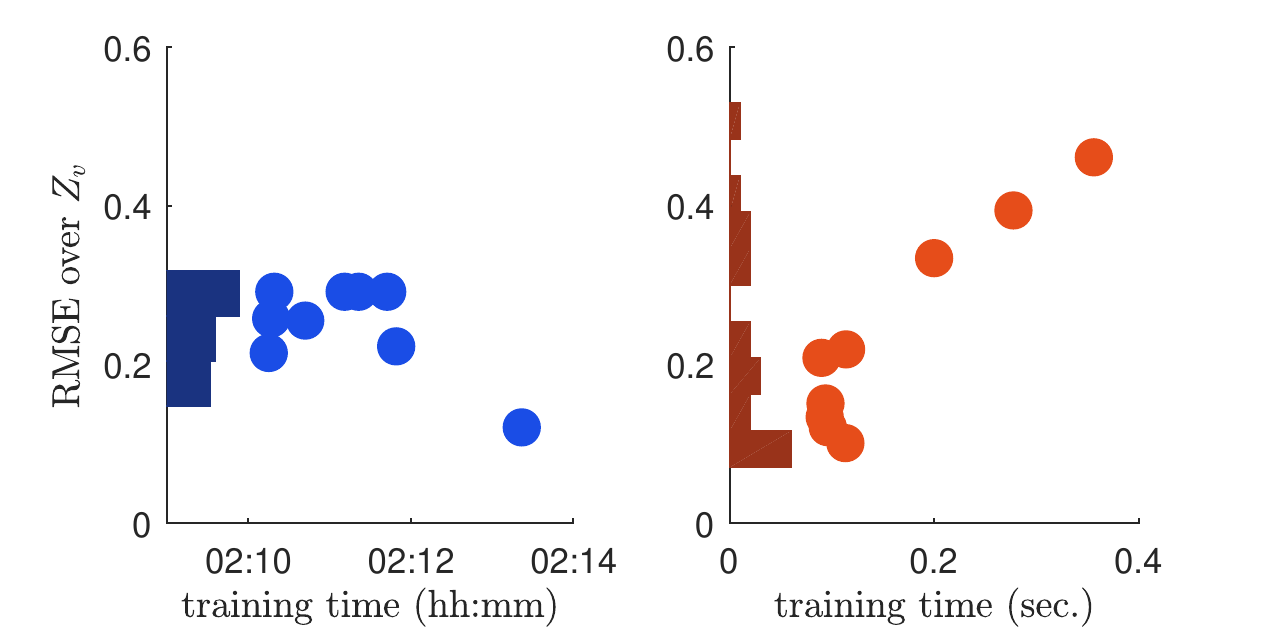}
  \caption{RMSE over validation dataset $\boldsymbol{Z}_{\rm v}$ and time for training each model $\mathcal{M}_2$ (blue) and $\mathcal{M}_3$ (red). The vertical bar graph represents the histogram of the RMSE over $\boldsymbol{Z}_{\rm v}$ and each point represents a different $\lambda$ value of the bi-objective problem.}
 \label{fig:ex2_data_jakubek_trainTime}
\end{figure}

\subsection{Experimental results: the gas-lifted oil well}
\label{sec:results_ex4}


A strategy used to avoid the lack of information caused by the failure of the \emph{downhole pressure} gauge in deep water oil production plants is the use of soft-sensor techniques. To this end, grey-box modeling procedures were applied as described in~\cite{teixeira2014, barbosa2015ifac,aguirre2017cep}.

Models of the downhole pressure should have good dynamic response and adequate static behavior. The former is important to give information about harmful dynamic events, like \emph{severe slugging}, and the latter can provide information about how to achieve good productivity conditions to ensure long service life of the oil well. To this end, grey-box models are very convenient because \emph{auxiliary information} about static behavior can be acquired from historical data.

In this context, steady-state values were estimated from historical data specifically from (almost) stationary conditions. Figure~\ref{fig:ex4_data_br_static} shows those stationary (fixed) points ($Z_{\rm s}$) that were used as \emph{auxiliary information}.

\begin{figure}
 \centering
  \includegraphics[width=1.0\textwidth]{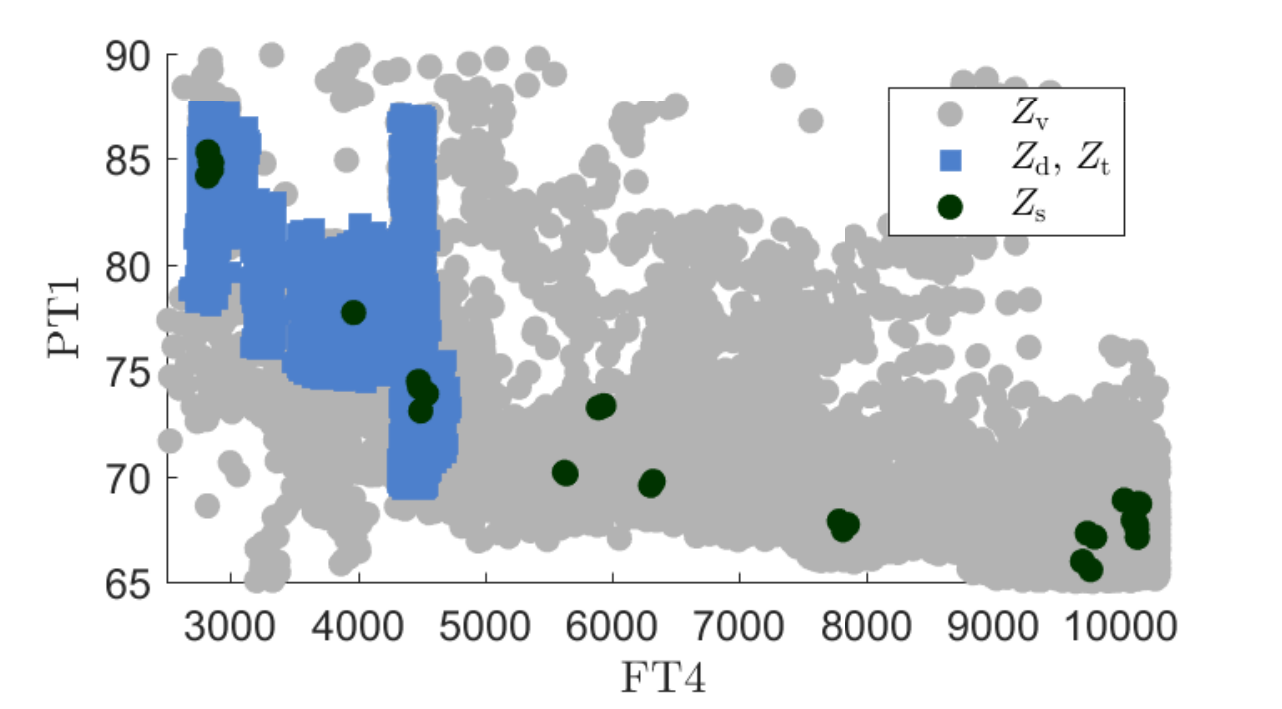}
  \caption{Comparison between the static points in $Z_{\rm s}$ and datasets $Z_{\rm d}$, $Z_{\rm t}$, $Z_{\rm v}$ represented over only one input, the instantaneous gas-lift flow rate (FT4). The output is
  the downhole pressure (PT1).}
 \label{fig:ex4_data_br_static}
\end{figure}

Figure~\ref{fig:ex4_data_br_zd_zv} shows instantaneous gas-lift flow rate (${\rm FT4}, u_1$) and downhole pressure (${\rm PT1}, y$) over the \emph{training} and \emph{validation} datasets. Like the previous numerical examples, the training and test datasets have information over a limited operating range, while the \emph{validation} dataset has operating ranges not present in the training data.

\begin{figure}
 \centering
  \includegraphics[width=1.0\textwidth]{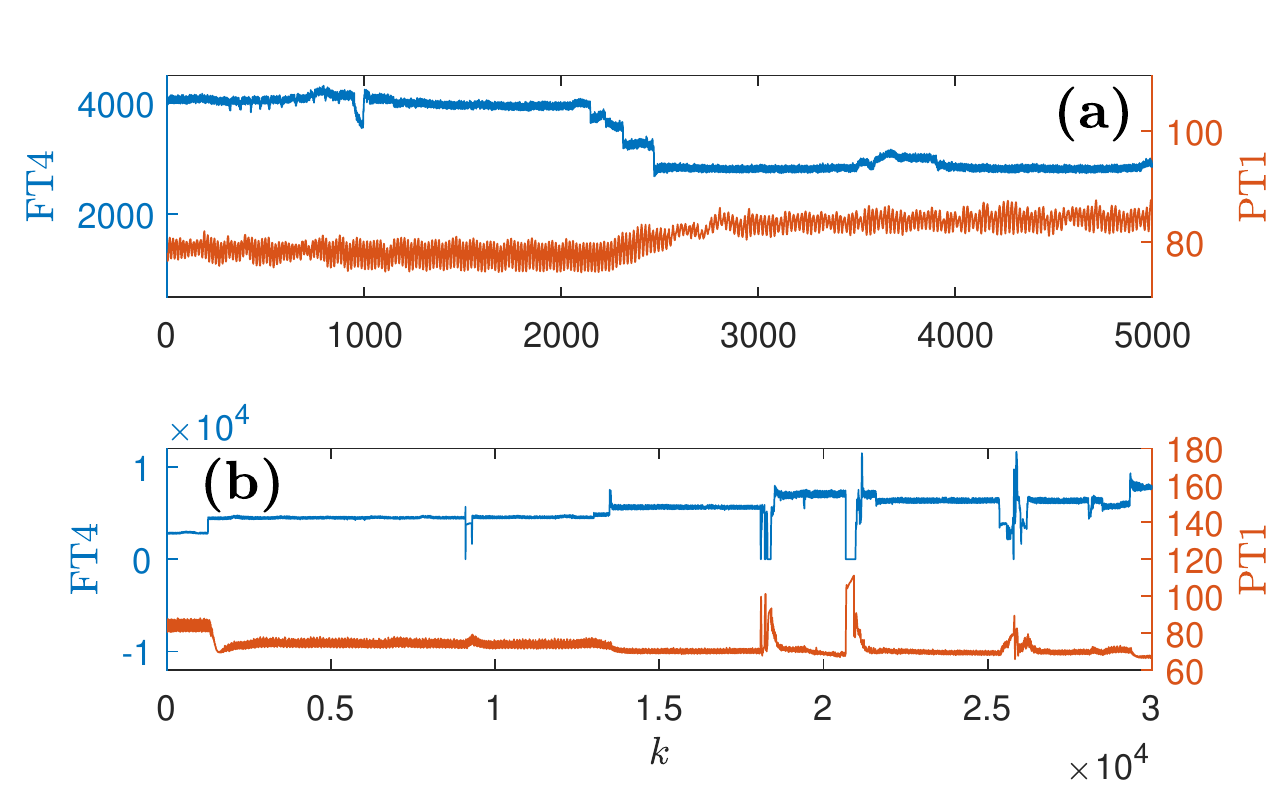}
  \caption{Instantaneous gas-lift flow rate FT4 ($u_1$) and the downhole pressure PT1 ($y$) from (a) training $\boldsymbol{Z}_{\rm d}$; and (b) validation $\boldsymbol{Z}_{\rm v}$ datasets. The fast oscillations are due to severe slugging.}
 \label{fig:ex4_data_br_zd_zv}
\end{figure}



The identified dynamic models use only platform variables 
\cite{teixeira2014,aguirre2017cep} with fixed MLP structure
%
\begin{align} \label{eq:ex4_structure}
  y(k) = \theta_0 + \sum_{i=1}^{10} \theta_i \tanh \bigg( & \theta_{i,0} + \theta_{i,1} y(k-1) + \theta_{i,2} y(k-2) + \theta_{i,3} y(k-3) \nonumber \\
       & + \theta_{i,4} u_1(k-1) + \theta_{i,5} u_1(k-42) + \theta_{i,6} u_1(k-136) \nonumber \\
       & + \theta_{i,7} u_2(k-1) + \theta_{i,8} u_2(k-42) + \theta_{i,9} u_2(k-136) \nonumber \\
       & + \theta_{i,10} u_3(k-1) + \theta_{i,11} u_3(k-5) + \theta_{i,12} u_3(k-22) \nonumber \\
       & + \theta_{i,13} u_4(k-1) + \theta_{i,14} u_4(k-5) + \theta_{i,15} u_4(k-22) \nonumber \\
       & + \theta_{i,16} u_5(k-1) + \theta_{i,17} u_5(k-5) + \theta_{i,18} u_5(k-22) \bigg),
\end{align}

\noindent that has 10 hidden nodes with activation function $\tanh(\cdot)$, and linear function in the output node. In~\eqref{eq:ex4_structure}, the signals $u_i(k)$ are variables available at the platform, and $y(k)$ is the downhole pressure. As in the previous example, the parameters of $\mathcal{M}_0$ were estimated with BP method and Levenberg-Marquardt algorithm (black-box approach). The proposed procedure is applied in training model family $\mathcal{M}_3$, with the weighted BP~\eqref{eq:weightMatrix_WLS} and the Levenberg-Marquardt algorithm, for 49 values of $\lambda\in [0.02,\,0.98]$. Models $\mathcal{M}_2$ were not implemented due to their high computational demand. On the other hand, it took only about 105 seconds to estimate all the 201 parameters of models in $\mathcal{M}_3$.

Table~\ref{tab:ex_results_rmse} shows the RMSE over the validation dataset. The best model obtained by the proposed procedure $\mathcal{M}_{3c}$ achieved improved results (Figure~\ref{fig:ex4_data_br_results_1}).  $\mathcal{M}_{3c}$ reached better performance than $\mathcal{M}_0$, especially at operating points for which
the only source of information was the \emph{auxiliary} data (e.g. $y\approx 70$). This is very relevant for many practical situations where the available dynamical data does not cover all operating regimes of the system. Obtaining static data from historical is normally a straightforward task that may help to find more representative models as shown in this real system example. Such models may provide, for instance, reliable state predictions for designing model predictive controllers, as discussed in \cite{ALTAN2020106548,WU202074}.

Table~\ref{tab:ex_results_rmse} shows the RMSE over the validation dataset for all examples. It is relevant
to point out that for none of the examples, the best models were obtained for $\lambda=0$ (which would mean to say that there was no gain in using static data). In particular, for the downhole soft-sensor, the relative importance of dynamical and static data is very well balanced. Hence it seems fair to conclude that the use of Proposition~1 makes good use of steady-state information while keeping computational costs quite moderate.

\begin{figure}
 \centering
  \includegraphics[width=1.0\textwidth]{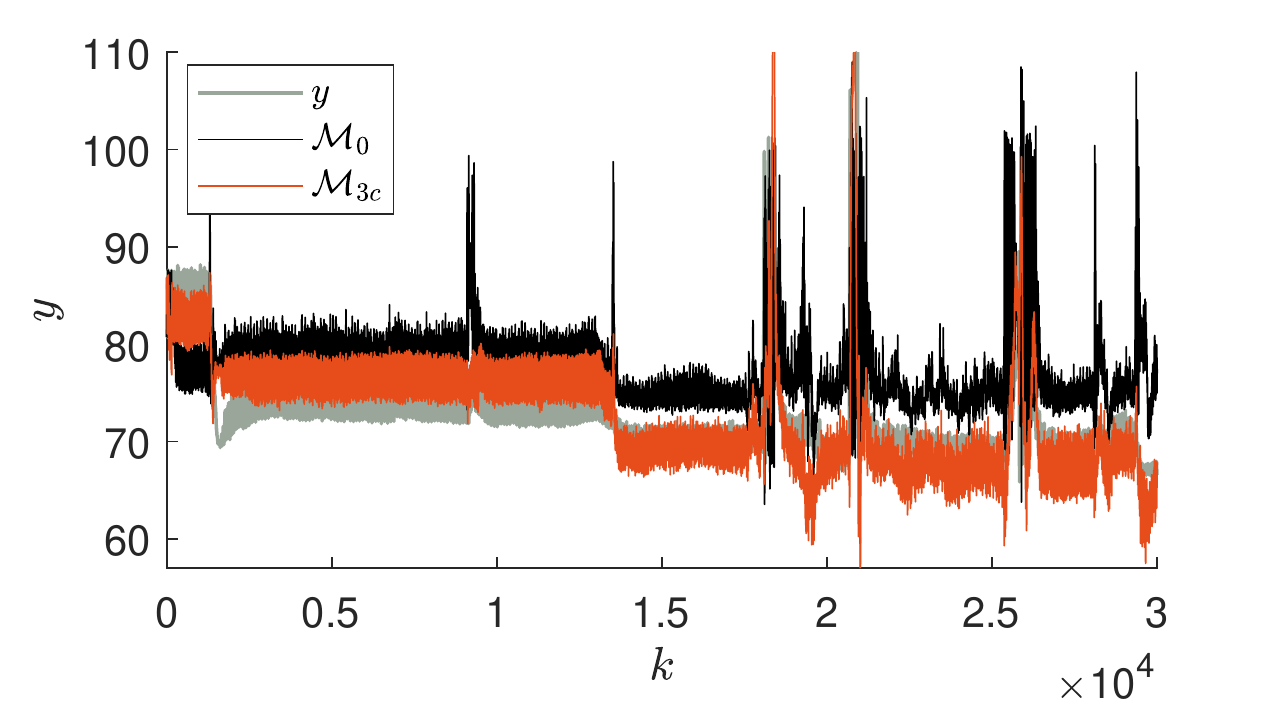}
  \caption{Free-run simulation over validation dataset $\boldsymbol{Z}_{\rm v}$.}
 \label{fig:ex4_data_br_results_1}
\end{figure}

\section{Conclusions}
\label{sec:conclusions}

This work proposes a novel method for including \emph{auxiliary information} about steady-state behavior of the system in the parameter estimation stage, by adding a new objective function (weighted problem). It was shown that the main difference between the proposed method and previous grey-box procedures is that the model fixed-points are not estimated during the identification task. {In practice, this results in computational times that are two or three orders of magnitude shorter than if the fixed-points had to be computed, as required in current grey-box procedures.} Thus, the main advantage lies in its simplicity that allows the insertion of that type of \emph{auxiliary information} in rather complex structures, such as neural networks, and the parameter estimation with ordinary algorithms, as the weighted least-squares (polynomial models) and weighted backpropagation (MLPs models). 

This work opens the possibility of applying the grey-box procedure to models with other structures, like \emph{auto regressive integrated moving average} (ARIMA), wavenets, \emph{radial basis functions} (RBF), \emph{smoothing spline models} (SSM), \emph{fuzzy models}, and others.

Simulated and experimental results showed the main aspects of the procedure and its capacity to achieve good performance with polynomial and MLP model structures.

\vspace{6pt} 




\section*{Acknowledgments}

This work was supported by  CNPq: 303412/2019-4 (LAA); CNPq: 304201/2018-9 (BHGB), FAPEMIG, IFMG and UFLA. The authors gratefully acknowledge Petrobras for providing the data. Leandro Freitas is grateful to IFMG Campus Betim for an academic leave.

\bibliographystyle{elsarticle-harv}
\bibliography{References}





\end{document}